\documentclass[a4paper,12pt]{article}
\usepackage[utf8]{inputenc}
\usepackage{amssymb,amsthm,amsmath}
\usepackage{color}
\usepackage{graphicx}
\usepackage{graphics}
\usepackage{ifpdf}
\usepackage{epstopdf}

\usepackage[sort]{natbib}
\bibpunct{(}{)}{;}{a}{,}{,}

\newtheorem{theorem}{Theorem}[section]
\newtheorem{defi}[theorem]{Definition}
\newtheorem{lemma}[theorem]{Lemma}

\newtheorem{cor}[theorem]{Corollary}

\newtheorem{assum}[theorem]{Assumption}
\usepackage{authblk}

\begin{document}
\bibliographystyle{apalike}

\title{Market models with optimal arbitrage \footnote{We are grateful to Johannes Ruf for helpful comments on an earlier version of the paper. }}
%\author{Huy N. Chau and Peter Tankov}
\author[1,2]{Huy N. Chau}
\author[2]{Peter Tankov}
\affil[1]{University of Padova}
\affil[2]{LPMA, Université Paris-Diderot}
\date{}

\maketitle

\begin{abstract}
We construct and study market models
admitting optimal arbitrage. We say that a model admits optimal arbitrage if it is possible, in a zero-interest rate setting, starting with an initial wealth of $1$ and using only positive portfolios, to superreplicate a constant $c>1$. The optimal arbitrage strategy is the
strategy for which this constant has the highest possible value. Our definition of optimal arbitrage is similar to the one in  \cite{fernholz_optimal_2010}, where optimal relative arbitrage with respect to the market portfolio is studied. In this work we present a systematic method to construct market models where the
optimal arbitrage strategy exists and is known
explicitly. We then develop several new examples of market models with
arbitrage, which are based on economic agents' views concerning the
impossibility of certain events rather than ad hoc constructions. We also explore the concept of
\emph{fragility} of arbitrage introduced in \cite{guasoni_fragility_2012}, and provide
new examples of arbitrage models which are not fragile in this sense. 
%This paper addresses the question of how to find the optimal arbitrage in the markets that do not satisfy NFLVR but satisfy NUPBR, a weaker concept of market viability. The problem is transformed into the superhedging problem. Some examples of optimal arbitrage are given by a construction with predictable stopping times.
\end{abstract}

Key words: optimal arbitrage, incomplete market, No Unbounded Profit with Bounded Risk, fragility of arbitrage, strict local martingales

\section{Introduction}
A key concept in mathematical finance is that of absence of arbitrage. Informally speaking, an arbitrage opportunity is the possibility of making money out of nothing without taking any risk. Clearly, such strategies should be excluded in order to ensure market efficiency. The mathematical formulation of the no-arbitrage theory has a long history from the first discrete-time result of \cite{harrison1979martingales} and to the characterizations of  \cite{delbaen_general_1994} and \cite{delbaen_fundamental_1998} in general semimartingale models. In particular, in \cite{delbaen_general_1994} it is shown that the No Free Lunch With Vanishing Risk (NFLVR) condition is equivalent to the existence of an Equivalent Local Martingale Measure (ELMM), i.e., a new probability measure under which the discounted asset price process is a local martingale.

The NFLVR provides a sound theoretical framework to solve problems of pricing, hedging or portfolio optimization. However, for some applications, requiring total absence of free lunches turns out to be too restrictive. Indeed, it seems reasonable to assume that limited arbitrage opportunities exist in financial markets given that whole desks of ``arbitrageurs'' are working full-time in investment banks to exploit them.   This is one of the reasons why market models with arbitrage opportunities have appeared in recent literature, starting with the three-dimensional Bessel process model of \cite{delbaen_arbitrage_1995}. Without relying on the concept of equivalent martingale measure, \cite{platen_benchmark_2006}, see also \cite{platen_various_2006}, developed the Benchmark Approach, a new asset pricing theory under physical measure. In the context of Stochastic Portfolio Theory \citep{karatzas_stochastic_2009}, the NFLVR condition is not imposed and arbitrage opportunities arise in relative sense.  These works suggest that NFLVR condition can be replaced by another weaker notion while preserving the solvability of the economics problems mentioned above,  and several new concepts have indeed been proposed, see \cite{fontana_weak_2013} for a review.
 
If one is interested in utility maximization, it has been shown \citep{karatzas_numeraire_2007} that the minimal no free lunch type condition making this problem well posed is the No Unbounded Profit with Bounded Risk (NUPBR) condition. This condition has also been referred to as BK in \cite{kabanov1997ftap} and it is also equivalent to the No Asymptotic Arbitrage of the 1st kind (NAA1) of \cite{kabanov1994large}. 
It is known that the NFLVR is equivalent to NUPBR plus the
classical no arbitrage assumption (see Corollary 3.4 and Corollary 3.8 of \cite{delbaen_general_1994} or Proposition 4.2 of \cite{karatzas_numeraire_2007}). This means that markets satisfying only NUPBR may admit arbitrage opportunities. 

To benefit from a potential arbitrage, one needs to characterize explicitly the arbitrage strategy, and also to devise a method to compare different strategies, so as to exploit the arbitrage opportunity in the most efficient way. 
An important step in this direction was made in \cite{fernholz_optimal_2010}. In this paper, the authors introduce the notion of optimal relative arbitrage with respect to the market portfolio and characterize the optimal relative arbitrage in continuous Markovian market models in terms of the smallest positive solution to a parabolic partial differential inequality. This work is extended in \cite{fernholz_optimal_2011} to market models with uncertainty regarding the relative risk and covariance structure of the assets. %Under appropriate conditions, the highest return relative to the market can be expressed not only by  the smallest positive supersolution to a fully nonlinear parabolic partial differential equation but also by a stochastic control problem. 
Precise conditions for the existence of both relative arbitrage and strong relative arbitrage opportunities are given in \cite{ruf_optimal_2011}. 
The optimal relative arbitrage turns out to be related to the minimal cost needed to superhedge the market portfolio in an almost sure way. 
The problem of hedging in markets with arbitrage opportunities is studied in detail in \cite{ruf_hedging_2013}. That paper shows in particular that delta hedging is still the optimal hedging strategy in continuous Markovian markets which admit no equivalent local martingale measure but only a square-integrable market price of risk. 

With the exception of the Stochastic Portfolio Theory of Fernholz and Karatzas and the Benchmark Approach of Platen and Heath, the existing literature on markets without arbitrage opportunities remains theoretical, and its findings are rarely used by practitiones to develop actual trading strategies based on arbitrage opportunities. This is related to the fact that examples of market with arbitrage found in the literature are typically ad hoc and are not flexible enough allow calibration to the actual market data. Moreover, it has been shown in \cite{guasoni_fragility_2012} that most of the existing examples of arbitrage in diffusion markets are \emph{fragile}, meaning that they disappear when transaction costs are introduced into the model, however small their value, or when prices are recorded with a small observation error. 

The goal of this study is to propose a new methodology for building models admitting optimal arbitrage, with an explicit characterization of the optimal arbitrage strategy. To do so, we start with a probability measure $\mathbb Q$ under which the NFLVR condition holds. We then construct a new probability measure $\mathbb P$, not equivalent to $\mathbb Q$, under which NFLVR no longer holds but NUBPR is still satisfied. This procedure is not new and goes back to the construction of the Bessel process by \cite{delbaen_arbitrage_1995}. However, we extend it in two directions.

Firstly, from the theoretical point of view, we provide a characterization of the superhedging price of a claim under $\mathbb P$ in terms of the superhedging price of a related claim under $\mathbb Q$. This allows us to characterize the optimal arbitrage profit under $\mathbb P$ in terms of the superhedging price under $\mathbb Q$, which is much easier to compute using the equivalent local martingale measures. 

Secondly, from the economic point of view, we provide an economic intuition for the new arbitrage model as a model implementing the view of the economic agent concerning the impossibility of certain market events. In other words, if an economic agent considers that a certain event (such as a sovereign default) is impossible, but it is actually priced in the market, our method can be used to construct a new model incorporating this arbitrage opportunity, and to compute the associated optimal arbitrage strategy. 

We then combine these two ideas to develop several new classes of examples of models with optimal arbitrage, allowing for a clear economic interpretation, with a special focus on incomplete markets. We also discuss the issue of robustness of these arbitrages to small transaction costs / small observation errors and show that some of our examples are not fragile in the sense of \cite{guasoni_fragility_2012}. 

The paper is organized as follows. In Section \ref{section:setting}, we describe the market setting and state the main assumptions. In Section \ref{section:optimal_arbitrage}, optimal arbitrage profit is introduced and related to a superhedging problem. In Section \ref{section:main_result}, we use an absolutely continuous measure change to build markets with optimal arbitrage. Finally, several new examples built using this construction are gathered in Section \ref{section:examples}. 

\section{General setting} \label{section:setting}
For the theory of stochastic process and stochastic integration, we refer to \cite{jacod_limit_2002} and \cite{protter_stochastic_2003}. 

Let $(\Omega, \mathcal F, \mathbb{F}, \mathbb P)$ be a given fitered probability space, where the filtration $\mathbb{F} = (\mathcal F_t)_{t\geq 0}$ is assumed to satisfy the usual conditions of right-continuity and augmentation by the $\mathbb P$-null sets. For any adapted RCLL process $S$, we denote by $S_-$ its predictable left-continuous version and by $\Delta S:=S - S_-$ its jump process. For a $d$-dimensional semimartingale $S$ and a predictable process $H$, we denote by $ H \cdot S$ the vector stochastic integral of $H$ with respect to $S$. We fix a finite planning horizon $T<\infty$ (a stopping time) and assume that after $T$ all price processes are constant and equal to their values at $T$. 

On the stochastic basic $(\Omega, \mathcal F, \mathbb{F}, \mathbb P)$, we consider a financial market with an $\mathbb R^d$-valued nonnegative semimartingale process $S = (S^1,...,S^d)$ whose components model the prices of $d$ risky assets. The riskless asset is denoted by $S^0$ and we assume that $S^0 \equiv 1$, that is, all price processes are already discounted. We suppose that the financial market is frictionless, meaning that there are no trading restrictions, transaction costs, or other market imperfections. 

Let $L(S)$ be the set of all $\mathbb{R}^d$-valued $S$-integrable predictable processes. It is the most reasonable class of strategies that investors can choose, but another constraint, which is described below, is needed in order to rule out doubling strategies. 
\begin{defi}
Let $x \in \mathbb{R_+}$. An element $H \in L(S)$ is said to be an $x$-admissible strategy if $H_0
= 0$ and $(H \cdot S)_t \ge - x$ for all $t \in [0, T ]$ $\mathbb P$-a.s. An element $H \in L(S)$ is said to be an admissible strategy if it is an $x$-admissible strategy for some $x \in \mathbb{R_+}$.
\end{defi}
For $x \in \mathbb{R_+}$, we denote by $\mathcal{A}_x$ the set of all $x$-admissible strategies and by $\mathcal{A}$ the set of all
admissible strategies. As usual, $H_t$ is assumed to represent the number of units of risky asset that we hold at time $t$. For $(x,H) \in \mathbb{R_+} \times \mathcal{A}$,
we define the portfolio value process $V_t^{x,H} = x + (H \cdot S)_t$. This is equivalent to require that portfolios are only generated by self-financing admissible strategies.

Given the semimartingale $S$, we denote by $\mathcal{K}_x$ the set of all outcomes that one can realize by $x$-admissible strategies starting with zero initial cost:
$$ \mathcal{K}_x = \left\lbrace (H \cdot S)_T| H  \text{ is $x$-admissible} \right\rbrace$$
and by $\mathcal{X}_x$ the set of outcomes of strategies with initial cost $x$:
$$\mathcal{X}_x = \left\lbrace x + ( H \cdot S)_T| H  \text{ is $x$-admissible} \right\rbrace.$$
Remark that all elements in $\mathcal{X}_x$ are nonnegative. The unions of all $\mathcal{K}_x$ and all $\mathcal{X}_x$ are denoted by $\mathcal{K}$
and $\mathcal{X}$, respectively. All bounded claims which can be superreplicated by admissible strategies are contained in 
$$ \mathcal{C} = \left( \mathcal{K} - L^0_+ \right) \cap L^{\infty}.$$
Now, we recall some no-free-lunch conditions, which are studied in the works of \cite{delbaen_general_1994},  \cite{karatzas_numeraire_2007} and \cite{kardaras_market_2012}.
\begin{defi}
\begin{itemize}
\item We say that the market satisfies the No Arbitrage (NA) condition with respect to general  admissible integrands if  
\[
\mathcal{C} \cap L_+^{\infty} = \left\lbrace 0 \right\rbrace .
\]
\item We say that the market satisfies the No Free Lunch with Vanishing Risk (NFLVR) property, with respect to general admissible integrands, if 
$$ \overline{\mathcal{C}} \cap L^{\infty}_+ = \left\lbrace 0 \right\rbrace  ,$$
where the bar denotes the closure in the supnorm topology of $L^{\infty}$.
\item There is No Unbounded Profit With Bounded Risk (NUPBR) if the set $ \mathcal{K}_1$ is bounded in $L^0$, that is, if
$$ \lim_{c \to \infty} \sup_{H \in \mathcal{A}_1} \mathbb P\left[ V^{0, H} > c \right] =0$$ 
holds.
\item The market admits Immediate Arbitrage (IA) if there exists a stopping time $\tau$ such that $\mathbb P\left[ \tau < T \right] >0 $ and a strategy $H = H 1_{]\tau, T] }$ which realizes $\left( H\cdot S \right)_t > 0 $ $\mathbb P$-a.s for all $t \in (\tau, T]$. We say that condition No Immediate Arbitrage (NIA) holds if there exists no immediate arbitrage in the market.
\end{itemize}
\end{defi}
The economic interpretation of no arbitrage type conditions above can be described as follows. Classical arbitrage means that one can make something out of nothing without risk. If there is a FLVR, starting with zero capital, one can find a sequence of wealth processes such that the terminal wealths converge to a nonnegative random variable which is not identical to zero and the risk of the trading strategies becomes arbitrarily small. If an UPBR exists, one can find a sequence of wealth processes with bounded (or indeed arbitrarily small) risk whose terminal wealths are unbounded with a fixed probability.

In this paper, we are interested in financial markets satisfying the following assumption.
\begin{assum}\label{assume:NUPBR} S is locally bounded, the market satisfies
NUPBR but the condition NFLVR fails under the physical measure $\mathbb P$.
\end{assum}

Under the local boundedness assumption, by the Fundamental Theorem of Asset Pricing, the NFLVR condition is equivalent to the existence of a ELMM (see Corollary 1.2 in \cite{delbaen_general_1994}), but for general semimartingales the ELMM must be replaced with an equivalent sigma-martingale measure. So, the limitation to locally bounded processes $S$ allows us to work with local martingales instead of sigma martingales.

When the NFLVR condition fails but the NUBPR condition holds, the ELMM is replaced with a weaker notion of ``deflator''. 

\begin{defi}
An equivalent local martingale deflator (ELMD) is a nonnegative process $Z$ with $Z_0=1$ and $Z_T > 0$ such that $ZV$ is a local martingale for all $V \in \mathcal{X}$.
\end{defi}
In particular, an ELMD is a nonnegative local martingale. Fatou's Lemma implies  that it is also a supermartingale and its expectation is less or equal to one. Hence, a local martingale density is an ELMD with expectation one. In general, we cannot use an ELMD to define a new probability measure, since the new measure loses mass. It is worth to remark that the situation when the ELMD is a strict local martingale is very different from a market with bubble. Indeed, an asset price is said to be a bubble if it is a strict local martingale under the risk-neutral measure, see \cite{heston_options_2006} , \cite{cox_local_2005}, \cite{jarrow_asset_2007}, \cite{jarrow_asset_2010}, which means that no arbitrage opportunity exists. 

The following result has recently been proven in \cite{kardaras_market_2012} in the one dimensional case. An alternative proof in the multidimensional case has been given in \cite{takaoka_condition_2010} by a suitable change of numeraire argument in order to apply the classical results of \cite{delbaen_general_1994}, and in \cite{song_alternative_2013} by only using the properties of the local characteristics of the asset process.
\begin{theorem}
The NUPBR condition is equivalent to the existence of at least one ELMD.
\end{theorem}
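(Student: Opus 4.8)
The plan is to prove the two implications separately. The implication ``existence of an ELMD $\Rightarrow$ NUPBR'' is elementary, so I would dispatch it first. If $Z$ is an ELMD, then for every $H\in\mathcal{A}_1$ the value process $V:=V^{1,H}=1+H\cdot S$ is nonnegative, hence $ZV$ is a nonnegative local martingale, hence a supermartingale, and $\mathbb{E}[Z_TV_T]\le Z_0V_0=1$. Since $Z_T>0$ $\mathbb{P}$-a.s., for any $\eta>0$ one may choose $\alpha>0$ with $\mathbb{P}[Z_T<\alpha]<\eta$, and then, using $(H\cdot S)_T\le V_T$ together with Markov's inequality, $\mathbb{P}[(H\cdot S)_T>c]\le\mathbb{P}[Z_T<\alpha]+\mathbb{P}[Z_TV_T>\alpha c]\le\eta+\tfrac{1}{\alpha c}$ for every $c>0$ and every $H\in\mathcal{A}_1$. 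Letting $c\to\infty$ and then $\eta\to0$ gives $\lim_{c\to\infty}\sup_{H\in\mathcal{A}_1}\mathbb{P}[(H\cdot S)_T>c]=0$, which is NUPBR.

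For the converse, ``NUPBR $\Rightarrow$ existence of an ELMD'', I would follow the change-of-numeraire argument of \cite{takaoka_condition_2010}. Step~1: since NUPBR says $\mathcal{K}_1$ is bounded in $L^0$ and $\mathcal{X}_1$ is convex, the theory of the numeraire portfolio (\cite{karatzas_numeraire_2007}) yields a strictly positive wealth process $\widehat V=1+\widehat H\cdot S$ such that $V/\widehat V$ is a $\mathbb{P}$-supermartingale for every nonnegative wealth process $V$. Step~2: change numeraire, passing to the market with price process $(1/\widehat V,\,S^1/\widehat V,\dots,S^d/\widehat V)$ and the constant $1$ as new numeraire; by numeraire-invariance of self-financing strategies every nonnegative wealth process of the original market corresponds to some $V/\widehat V$, which is a $\mathbb{P}$-supermartingale, so $\mathbb{P}$ ``supermartingale-dominates'' the new market. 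Hence the new market satisfies NA, and the standard closedness argument of \cite{delbaen_general_1994} upgrades NA to NFLVR for it. Step~3: by the Fundamental Theorem of Asset Pricing there is an equivalent $\sigma$-martingale measure $\mathbb{Q}\sim\mathbb{P}$ for the numeraire-changed market, so $1/\widehat V$ and each $S^i/\widehat V$ is a $\mathbb{Q}$-$\sigma$-martingale. Step~4: set $Z_t:=\frac{d\mathbb{Q}}{d\mathbb{P}}\big|_{\mathcal F_t}\cdot\frac{1}{\widehat V_t}$; then $Z_0=1$, $Z_T>0$ $\mathbb{P}$-a.s., and for any $V\in\mathcal{X}$ the process $V/\widehat V$ is a nonnegative $\mathbb{Q}$-$\sigma$-martingale, hence a $\mathbb{Q}$-local martingale by the Ansel--Stricker theorem, hence (by Bayes' rule) $Z_tV_t=\frac{d\mathbb{Q}}{d\mathbb{P}}\big|_{\mathcal F_t}(V_t/\widehat V_t)$ is a $\mathbb{P}$-local martingale. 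Thus $Z$ is an ELMD.

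The main obstacle is Step~1: extracting the numeraire portfolio from NUPBR alone. This requires a convex-compactness / Komlos-type argument on $\mathcal{X}_1$, and it is precisely the boundedness of $\mathcal{K}_1$ in $L^0$ that keeps the relevant optimization (for instance, maximizing $\mathbb{E}[\log V_T]$ over $V\in\mathcal{X}_1$) non-degenerate and the maximizer attainable. A secondary technical point is that $S^i/\widehat V$ need not be locally bounded even though $S$ is, which is why one works with $\sigma$-martingales in Steps~3--4 and then recovers local martingales via Ansel--Stricker; one may instead bypass the numeraire change entirely and argue directly on the predictable characteristics of $S$, as in \cite{song_alternative_2013}.
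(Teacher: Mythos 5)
The paper does not actually prove this theorem: it is quoted from the literature, with the one-dimensional case attributed to \cite{kardaras_market_2012} and the multidimensional case to \cite{takaoka_condition_2010} (change of numeraire) and \cite{song_alternative_2013} (local characteristics). Your sketch is a faithful reconstruction of the Takaoka route. The direction ``ELMD $\Rightarrow$ NUPBR'' is complete and correct as written: the bound $\mathbb{P}[(H\cdot S)_T>c]\le\mathbb{P}[Z_T<\alpha]+\tfrac{1}{\alpha c}$ is uniform in $H\in\mathcal{A}_1$, which is exactly what the definition of NUPBR requires. For the converse you correctly identify the two deep inputs on which everything rests: existence of the num\'eraire portfolio under NUPBR (Karatzas--Kardaras) and the FTAP of Delbaen--Schachermayer applied to the num\'eraire-changed market; you also correctly flag the loss of local boundedness after the change of num\'eraire and the need for the Ansel--Stricker theorem to get back from $\sigma$-martingales to local martingales. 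The one soft spot is Step~2: the passage from NA to NFLVR for the new market is not a ``closedness argument'' but rather the identity NFLVR $=$ NA $+$ NUPBR (which the paper itself recalls), combined with the observation that NUPBR is num\'eraire-invariant because $\widehat V_T>0$ a.s.\ preserves boundedness in $L^0$ under division; stating it that way closes the gap. With that repair, your outline matches the cited proof and is acceptable at the level of detail appropriate for a theorem the paper itself only quotes.
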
 

\paragraph{Fragile and robust arbitrages} Since real markets have frictions, an arbitrage which disappears in the presence of small transaction costs, or small observation errors, cannot be exploited in practice. This property, known as fragility of arbitrage \citep{guasoni_fragility_2012} is described in the following two definitions.
{ 
\begin{defi}
For $\varepsilon >0$, two strictly positive processes $S, \tilde{S}$ are $\varepsilon$-close if
$$ \frac{1}{1 + \varepsilon} \leq \frac{\tilde{S}_t}{S_t} \leq 1 + \varepsilon \qquad a.s. \text{ for all } t \in [0,T].$$
\end{defi}

\begin{defi}[Fragility/Robustness]
We say that arbitrage in the $(\mathbb{P},S)$-market is fragile if for every $\varepsilon >0$ there exists a process $\tilde{S}$, which is $\varepsilon$-close to $S$, such that the $(\mathbb{P}, \tilde{S})$-market satisfies NFLVR. If the arbitrage is not fragile we say that it is robust. 
\end{defi}
\cite{guasoni_fragility_2012} show that in diffusion settings, if the coefficients of the log-price process are locally bounded, arbitrages are fragile. For instance, when we introduce small frictions in the Bessel example, the arbitrage disappears.

\cite{bender_simple_2012} defines a \emph{simple obvious arbitrage} as a buy and hold strategy, which guarantees to the investor a profit of at least $\varepsilon>0$ if the investor trades at all. It is clear that a simple obvious arbitrage is always robust. 
}
\section{Optimal arbitrage} \label{section:optimal_arbitrage}
It is well known that NFLVR holds if and only if both NUPBR and NA hold, see Corollary 3.4 and 3.8 of \cite{delbaen_general_1994} or Proposition 4.2 of \cite{karatzas_numeraire_2007}. Moreover, Lemma 3.1 of \cite{delbaen_existence_1995} shows that if NA fails then either the market admits an immediate arbitrage or an arbitrage that is created by a strategy in $\mathcal{A}_1$. Furthermore, if there exists an immediate arbitrage, the associated strategy is in $\mathcal{A}_0$ and can be freely scaled to produce an unbounded arbitrage. But this situation is not allowed in our market due to  Assumption \ref{assume:NUPBR}. Therefore, it is only possible to exploit arbitrages by using strategies in the set $\mathcal{A}_x$ where $x>0$.
This reasoning is formalized by the following lemma.
\begin{lemma} \label{lemma:NUPBR_NIA}
NUPBR implies NIA.
\end{lemma}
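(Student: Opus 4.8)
My plan is to argue by contradiction: suppose an immediate arbitrage exists and show it produces an unbounded profit with bounded risk, contradicting NUPBR. Suppose there is a stopping time $\tau$ with $\mathbb{P}[\tau < T] > 0$ and a strategy $H = H\mathbf{1}_{]\tau,T]} \in L(S)$ such that $(H\cdot S)_t > 0$ $\mathbb{P}$-a.s. for all $t \in (\tau, T]$. The key observation is that because the strategy only trades on $]\tau, T]$ and generates a process that stays strictly positive there (in particular nonnegative), $H$ is $0$-admissible: $H_0 = 0$ and $(H\cdot S)_t \geq 0 \geq -1$ for all $t$. Hence for every scalar $\lambda > 0$, the scaled strategy $\lambda H$ is also $0$-admissible, so $\lambda H \in \mathcal{A}_1$ (indeed $\lambda H \in \mathcal{A}_0 \subset \mathcal{A}_1$).

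Next I would quantify the profit. Fix any $t_0$ with $\mathbb{P}[\tau < t_0 \leq T] > 0$; since $(H\cdot S)_{t_0} > 0$ on $\{\tau < t_0\}$ (and $=0$ on $\{\tau \geq t_0\}$), we can find a constant $\delta > 0$ and a set $A \in \mathcal{F}$ with $\mathbb{P}[A] =: p > 0$ such that $(H\cdot S)_{t_0} \geq \delta$ on $A$ — this is just the statement that a positive random variable exceeds some positive threshold on a set of positive measure. Now consider the family $\{\lambda H : \lambda > 0\} \subset \mathcal{A}_1$. For any level $c > 0$, on the set $A$ we have $V^{0,\lambda H}_{t_0} = \lambda (H\cdot S)_{t_0} \geq \lambda \delta$, which exceeds $c$ as soon as $\lambda > c/\delta$. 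Therefore
\[
\sup_{G \in \mathcal{A}_1} \mathbb{P}\big[ V^{0,G}_{t_0} > c \big] \;\geq\; \mathbb{P}[A] \;=\; p \;>\; 0
\]
for every $c > 0$, and taking $c \to \infty$ shows that $\mathcal{K}_1$ is not bounded in $L^0$ (one should also pass from the value at $t_0$ to the value at $T$, but since prices are constant after $T$ and we may stop the strategy at $t_0$, i.e. replace $H$ by $H\mathbf{1}_{]\tau, t_0]}$, the terminal outcome $(H\cdot S)_T$ already dominates $\delta$ on $A$). This contradicts NUPBR.

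The argument is essentially routine; the only point requiring a little care is the measurability/thresholding step — extracting a deterministic $\delta$ and a positive-probability set $A$ from the almost-sure positivity of $(H\cdot S)_{t_0}$ — and the bookkeeping needed to ensure the scaled and stopped strategies genuinely lie in $L(S)$ and in $\mathcal{A}_0$. The conceptual heart is simply that an immediate-arbitrage strategy never goes negative, hence can be freely scaled, so NUPBR (boundedness of $\mathcal{K}_1$ in $L^0$) immediately forbids it. I would present it cleanly as a short contradiction argument along the lines above.
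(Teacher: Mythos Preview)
Your argument is correct and follows essentially the same route as the paper: observe that the immediate-arbitrage strategy $H$ lies in $\mathcal{A}_0$, scale it by arbitrary positive constants, and conclude that $\mathcal{K}_1$ is unbounded in $L^0$. The paper's proof is just the terse version of yours; your intermediate-time detour via $t_0$ is unnecessary (you can work directly with $(H\cdot S)_T$, which is strictly positive on $\{\tau<T\}$), but it does no harm.
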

\begin{proof}
We will prove that immediate arbitrage implies unbounded profit with bounded risk. Assume $S$ admits immediate arbitrage at stopping time $\tau$ and $\mathbb P(\tau < T) > 0$. There exits a strategy $H$ such that $H = H1_{\left( \tau, T \right] }$ and $(H\cdot S)_t > 0$ for $t \in \left( \tau, T \right]$. We observed that $H^n:= nH \in \mathcal{A}_0$ and the set $\left\lbrace V^{0,H^n}_T \right\rbrace_{n \in \mathbb{N}}$ is not bounded in probability. This means that the market admits unbounded profit with bounded risk.
\end{proof}
For these reasons, arbitrages in our market are limited and the question of optimal arbitrage profit arises naturally. 
\begin{defi} \label{defi:optimal_arbitrage}
For a fixed time horizon $T$, we define 
$$U(T) : = \sup \left\{ {c > 0:\exists H  \in \mathcal{A}_1,V_T^{1,H}   \ge c, \mathbb P-a.s} \right\}.$$
If $U(T) > 1$, we call $U(T)$ optimal arbitrage profit.
\end{defi}
The quantity $U(T)$ is the maximum deterministic amount that one can realize at time T starting from unit initial capital. Obviously, this value is bounded from below by 1. This definition goes back to the paper of \cite{fernholz_optimal_2010}. In diffusion setting, these authors characterize the following value
$$ \sup \left\lbrace c > 0: \exists H \in \mathcal{A}_1, V_T^{1,H} \ge c \sum S^i_T, \mathbb P-a.s. \right\rbrace,$$
which is the highest return that one can achieve relative to the market capitalization.
\subsection{Optimal arbitrage and superhedging price}
\begin{defi}\label{defi:superhedge_price}
Given a claim $f \ge 0$, we define
$$SP_+(f) := \inf \left\{ x \ge 0: \exists H \in \mathcal{A}_x ,V_T^{x,H}  \ge f, \mathbb P-a.s \right\},$$
that is the minimal amount starting with which one can superhedge $f$ by a nonnegative wealth process.
\end{defi}

A reason for limiting oneself to nonnegative wealth processes is discussed in \cite{ruf_optimal_2011}. If $Z$ is a local martingale deflator and a wealth process $V$ is only restricted to stay above some constant $x < 0$, then $ZV$ may no longer be a supermartingale. Let us compare the definition of $SP_+$-price with the superhedging price in the literature. The superhedging price of a given claim $f$ is commonly defined by
$$SP(f) = \inf \left\{ x \ge 0: \exists H \in \mathcal{A} ,V_T^{x,H}  \ge f, \mathbb P-a.s \right\}.$$
In other words, we allow wealth processes (maybe negative) that are uniformly bounded from below. Note that in markets that satisfy NA, $SP_+(f) = SP(f)$. Indeed, if NA holds, for every admissible integrand $H$ we have $ \| (H \cdot S)_t^{-} \|_{\infty} \le \| (H \cdot S)_T^{-} \|_{\infty}$, see Proposition 3.5 in \cite{delbaen_general_1994}. If $ x + (H \cdot S)_T \ge f$ then $(H \cdot S)_T \ge f - x \ge -x$ so that $\| (H \cdot S)_T^{-} \|_{\infty} \le x$. This implies that $ \| (H \cdot S)_t^{-} \|_{\infty} \le x$ or $(H \cdot S)_t \ge -x,$ for all $t \in [0,T]$. 
In our market model with arbitrage, $SP(f) \le SP_+(f)$. The difference between the two superhedging prices is discussed in \cite{khasanov_upper_2013}.

The following lemma is simple but useful to our problem.
\begin{lemma} $U(T) = 1/SP_+(1)$.
\end{lemma}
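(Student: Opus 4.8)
The plan is to prove the two inequalities $U(T) \le 1/SP_+(1)$ and $U(T) \ge 1/SP_+(1)$ separately, each time by rescaling a superhedging strategy. The key observation is that admissibility constraints scale linearly: if $H \in \mathcal{A}_x$ then $\lambda H \in \mathcal{A}_{\lambda x}$ for any $\lambda > 0$, and $V_T^{\lambda x, \lambda H} = \lambda V_T^{x,H}$.

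First I would show $U(T) \le 1/SP_+(1)$. Take any $c > 0$ admissible in the definition of $U(T)$, i.e.\ there is $H \in \mathcal{A}_1$ with $V_T^{1,H} \ge c$ a.s. Rescaling by $\lambda = 1/c$ gives $\tfrac1c H \in \mathcal{A}_{1/c}$ with $V_T^{1/c,\,H/c} = \tfrac1c V_T^{1,H} \ge 1$ a.s., so $\tfrac1c H$ superhedges the constant claim $f \equiv 1$ starting from $1/c$. Hence $SP_+(1) \le 1/c$, i.e.\ $c \le 1/SP_+(1)$. Taking the supremum over such $c$ yields $U(T) \le 1/SP_+(1)$. (If $SP_+(1) = 0$ this reads $U(T) \le \infty$, which is trivially true; one should note in passing that $SP_+(1) > 0$ always, since a $0$-admissible superhedge of $1$ would be an immediate arbitrage, excluded by Lemma~\ref{lemma:NUPBR_NIA}, but this is not strictly needed for the inequality.)

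For the reverse inequality $U(T) \ge 1/SP_+(1)$, take any $x \ge 0$ admissible in the definition of $SP_+(1)$, so there is $H \in \mathcal{A}_x$ with $V_T^{x,H} \ge 1$ a.s.; we may assume $x > 0$. Rescaling by $\lambda = 1/x$ gives $\tfrac1x H \in \mathcal{A}_1$ with $V_T^{1,\,H/x} = \tfrac1x V_T^{x,H} \ge 1/x$ a.s. Thus $c = 1/x$ is admissible in the supremum defining $U(T)$ (note $1/x$ may be less than or greater than $1$, but the supremum is over all $c > 0$), so $U(T) \ge 1/x$. Taking the infimum over all admissible $x$ gives $U(T) \ge 1/SP_+(1)$.

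The two inequalities together give $U(T) = 1/SP_+(1)$. The argument is essentially a bookkeeping exercise in the homogeneity of the admissibility sets $\mathcal{A}_x$ and the wealth processes; the only point requiring a little care is the handling of the degenerate cases ($x = 0$ or $SP_+(1) = 0$), which I would dispatch by the remark above that $SP_+(1) > 0$ is forced by NUPBR via Lemma~\ref{lemma:NUPBR_NIA}. No serious obstacle is expected.
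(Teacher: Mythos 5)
Your proof is correct and follows essentially the same route as the paper: the $(\le)$ direction by rescaling an arbitrage strategy by $1/c$ to obtain a superhedge of the constant claim $1$, and the $(\ge)$ direction by the symmetric rescaling, which the paper leaves implicit. Your additional remark that $SP_+(1)>0$ (which follows from NUPBR by scaling a $0$-admissible superhedge of $1$) is a harmless and reasonable extra precaution not present in the paper.
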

\begin{proof}
$(\le)$ Take any $c>0$ such that there exists a strategy $H \in \mathcal{A}_1$ which satisfies
\begin{itemize}
\item $V_T^{1,H } = 1 + (H\cdot S)_T  \ge c, \mathbb P-a.s.$
\item $(H \cdot S)_t \ge -1.$ for all $0 \le t \le T$.
\end{itemize}
Then a simple scaling argument gives us a strategy to hedge 1
\begin{itemize}
\item (superheging) $1/c + 1/c(H\cdot S)_T  \ge 1, \mathbb P-a.s.$
\item (admissibility) $1/c(H\cdot S)_t \ge -1/c$ for all  $0 \le t \le T$.
\end{itemize}
By Definition \ref{defi:superhedge_price}, one can superhedge 1 at cost $1/c$. Therefore, we get an upper bound for optimal arbitrage profit
\[
U(T) \le \frac{1}{SP_+(1)}.
\]
$(\ge)$ The converse inequality can be proved by the same argument.
\end{proof}
The above lemma has two consequences. First, the knowledge of $SP_+(1)$ is enough to find optimal arbitrage profit. Second, one should find the strategy to superhedge 1 in order to realize optimal arbitrage.

Obviously, $SP_+(1) \le 1$. If $SP_+(1) < 1$, there is optimal arbitrage. If $SP_+(1) = 1$, optimal arbitrage does not exist, but arbitrages may still exist. In Example 8 of \cite{ruf_optimal_2011}, the cheapest price to hold $1$ is $1$, but we can achieve a terminal wealth larger than $1$ with positive probability.

\section{Constructing market models with optimal arbitrage}
\label{section:main_result}
In this section we present two constructions of market models with optimal arbitrage. They both work by starting with a probability measure $\mathbb Q$ under which the price process satisfies NFLVR and making a non-equivalent measure change to construct a new measure $\mathbb P$ allowing for arbitrage. Arbitrage opportunities constructed with an absolutely continuous measure change have been studied in earlier works. The first example of this kind of technique is the Bessel model, which is given in  \cite{delbaen_arbitrage_1995}. This technique is generalized in  \cite{osterrieder_arbitrage_2006} and \cite{ruf_systematic_2013}. However, we push this idea further by characterizing the superhedging price under $\mathbb P$ in terms of the superhedging price under $\mathbb Q$, which enables us to describe optimal arbitrages. 
\subsection{A construction based on a nonegative martingale} \label{section:construction_martingale}
Let $\mathbb Q$ be a probability measure on the filtered measure space $\left( {\Omega ,\mathcal F , \left( \mathcal F_t \right)_{t\geq 0}}\right)$ described in the beginning of Section \ref{section:setting}, and assume that under $\mathbb Q$, the following are true:
\begin{itemize}
\item The risky asset process $S$ is a locally bounded semimartingale which satisfies NFLVR. 
\item There exists a nonnegative RCLL martingale $M$ with $M_0 = 1$ and 
\begin{equation} \label{eq:condition_on_M}
{\mathbb Q}[\{\tau \le T\} \cap \{ M_{\tau -} > 0 \}] = 0.
\end{equation}
where $\tau = \inf \{t \ge 0: M_t = 0\}$ with the convention that $\inf \emptyset = + \infty.$
\end{itemize}
Since $M$ is right-continuous, this means that $M$ may only hit zero continuously on $[0,T]$.
Using $M$ as a Radon-Nykodym derivative, we define a new probability measure via 
$$
\frac{d\mathbb P}{d{\mathbb Q}}\Big|_{\mathcal F_t} = M_{\tau \wedge t}.
$$ 
Then $\mathbb P$ is only absolutely continuous (but not equivalent) with respect to ${\mathbb Q}$. In fact, $M$ can reach zero under ${\mathbb Q}$ but it is always positive under $\mathbb P$, because $\mathbb P[\tau \le T] = {\mathbb E}^{\mathbb Q}[M_{\tau \wedge T}1_{\tau \le T}] = 0$. 

%The crucial point for the existence of arbitrage in the $(\mathbb P,S)$-market is the existence of a portfolio which starts from initial capital less than 1 and superreplicates the claim $1_{M_T > 0}$ in the $({\mathbb Q},S)$-market, or, equivalently
%$$ \sup_{\bar {\mathbb Q} \in ELMM({\mathbb Q},S)} {\mathbb E}^{\bar {\mathbb Q}}[1_{M_T > 0}] < 1.$$
%In this section, we continue this work by proving that the superhedging price of $1_{M_T > 0}$ also provides optimal arbitrage profit.
\begin{theorem} \label{theorem:optimal_arbitrage_constrution_martingale}
Under the above assumptions, the $(\mathbb P,S)$-market satisfies NUPBR, and for any $\mathcal F_T$-measurable claim $f \ge 0$, we have  $$SP^\mathbb P_+(f) = SP^{\mathbb Q}_+(f1_{M_T > 0}).$$ 
\end{theorem}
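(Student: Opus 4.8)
The plan is to prove the two assertions separately, but both rest on the same device: translating statements about $\mathbb P$-admissible strategies into statements about $\mathbb Q$-admissible strategies via the absolutely continuous change of measure, using the fact that $\mathbb P[M_T>0]=1$ while $\mathbb Q$ may charge $\{M_T=0\}$. First I would record the basic dictionary. Since $d\mathbb P/d\mathbb Q|_{\mathcal F_t}=M_{\tau\wedge t}$ and $M$ hits zero only continuously, $\mathbb P\ll\mathbb Q$ with the support of $\mathbb P$ being (a subset of) $\{M_T>0\}=\{\tau>T\}$ up to $\mathbb Q$-null sets. Consequently: (i) a property holds $\mathbb P$-a.s. iff it holds $\mathbb Q$-a.s. on $\{M_T>0\}$; (ii) the classes $L(S)$ and the admissibility constraints $(H\cdot S)_t\ge -x$ are insensitive to the measure as long as we only demand them on the support of $\mathbb P$ — this is where condition \eqref{eq:condition_on_M} is essential, because it lets us localize before $\tau$ without loss. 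I would make precise the claim that $\mathcal A^{\mathbb P}_x$ and $\mathcal A^{\mathbb Q}_x$ are ``the same'' after intersecting trajectories with $\{M_T>0\}$: given $H\in\mathcal A^{\mathbb Q}_x$ it is automatically in $\mathcal A^{\mathbb P}_x$ (a $\mathbb Q$-a.s. lower bound is a $\mathbb P$-a.s. lower bound); conversely, given $H\in\mathcal A^{\mathbb P}_x$, one stops $H$ at $\tau$ to obtain $H^\tau$, and using \eqref{eq:condition_on_M} (so that $M_{\tau-}=0$ on $\{\tau\le T\}$, hence nothing is lost on the stopped stochastic integral up to a $\mathbb Q$-null set) one checks $(H^\tau\cdot S)_t\ge -x$ holds $\mathbb Q$-a.s., i.e. $H^\tau\in\mathcal A^{\mathbb Q}_x$, while $H^\tau=H$ on $\{M_T>0\}$.

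For NUPBR under $\mathbb P$: I would exhibit an ELMD for $(\mathbb P,S)$ and invoke Theorem 2.x (NUPBR $\Leftrightarrow$ existence of an ELMD). Under $\mathbb Q$, NFLVR gives an ELMM $\mathbb Q^*\sim\mathbb Q$ with density process $Y$ (a $\mathbb Q$-martingale, using local boundedness of $S$). The natural candidate deflator under $\mathbb P$ is $Z_t := Y_t/M_{\tau\wedge t}$ on $\{t<\tau\}$, suitably interpreted — more robustly, one takes $Z$ to be the $\mathbb P$-density process of $\mathbb Q^*$ restricted appropriately, or simply notes $Z_t = Y_t / M_t$ is well defined $\mathbb P$-a.s. since $M>0$ $\mathbb P$-a.s. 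Then for $V\in\mathcal X^{\mathbb P}$, writing $V=V^{\tau}$ on the $\mathbb P$-support, $Z_tV_t = Y_tV_t/M_t$ and one shows $ZV$ is a $\mathbb P$-local martingale because $YV$ is a $\mathbb Q$-local martingale (as $\mathbb Q^*$ is an ELMM, $V$ being a nonnegative $\mathbb Q$-supermartingale under $\mathbb Q^*$ hence $YV$ a $\mathbb Q$-local martingale) and the Bayes/Girsanov bookkeeping for the $M$-change of measure converts $\mathbb Q$-local martingales on the support of $\mathbb P$ into $\mathbb P$-local martingales after dividing by $M$. I'd verify $Z_0=1$ and $Z_T>0$ $\mathbb P$-a.s. (both clear since $Y>0$ and $M_T>0$ $\mathbb P$-a.s.). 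Hence $Z$ is an ELMD for $(\mathbb P,S)$ and NUPBR holds.

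For the superhedging identity $SP^{\mathbb P}_+(f)=SP^{\mathbb Q}_+(f1_{M_T>0})$: both inequalities follow from the strategy dictionary above. For $\ge$: if $x$ and $H\in\mathcal A^{\mathbb Q}_x$ satisfy $x+(H\cdot S)_T\ge f1_{M_T>0}$ $\mathbb Q$-a.s., then on $\{M_T>0\}$ this reads $x+(H\cdot S)_T\ge f$, which is a $\mathbb P$-a.s. statement; since $H\in\mathcal A^{\mathbb Q}_x\subseteq\mathcal A^{\mathbb P}_x$ and $V^{x,H}\ge 0$, $x$ is an admissible superhedging cost under $\mathbb P$, giving $SP^{\mathbb P}_+(f)\le SP^{\mathbb Q}_+(f1_{M_T>0})$. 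For $\le$: if $x$ and $H\in\mathcal A^{\mathbb P}_x$ superhedge $f$ under $\mathbb P$, replace $H$ by $H^\tau$; it lies in $\mathcal A^{\mathbb Q}_x$, keeps $V^{x,H^\tau}\ge 0$ $\mathbb Q$-a.s., and satisfies $x+(H^\tau\cdot S)_T\ge f$ on $\{M_T>0\}$ (where $H^\tau=H$), while on $\{M_T=0\}$ we only need to dominate $f1_{M_T>0}=0$, which holds since $V^{x,H^\tau}_T\ge 0$. Thus $x\ge SP^{\mathbb Q}_+(f1_{M_T>0})$, giving the reverse inequality. The main obstacle I anticipate is the careful justification that stopping at $\tau$ does not ``lose mass'' on the stochastic integral — i.e. that $(H\cdot S)^\tau$ agrees with $H\cdot S$ on $\{M_T>0\}$ and that $H^\tau$ is genuinely $x$-admissible under $\mathbb Q$ (not merely under $\mathbb P$); this is exactly the point where the continuity of $M$ at $\tau$, i.e. $M_{\tau-}>0\Rightarrow\tau>T$ from \eqref{eq:condition_on_M}, must be used, together with the fact that a $\mathbb Q$-null set on which admissibility might fail is invisible to $\mathbb P$. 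A secondary technical point is ensuring the ELMD construction goes through when $M$ and $Y$ have jumps and $S$ is a general locally bounded semimartingale rather than a diffusion; here local boundedness of $S$ and the martingale (not just local martingale) property of $M$ do the work.
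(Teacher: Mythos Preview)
Your approach to NUPBR and to the inequality $SP^{\mathbb P}_+(f)\le SP^{\mathbb Q}_+(f1_{M_T>0})$ is correct and coincides with the paper's: take an ELMM density $\bar Z$ under $\mathbb Q$ and check that $\bar Z/M$ is an ELMD under $\mathbb P$; for the easy inequality, transport a $\mathbb Q$-admissible strategy to $\mathbb P$ using $\mathbb P\ll\mathbb Q$ (Protter, Theorem~25, p.~170).

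The genuine gap is in the reverse inequality. You propose to take $H\in\mathcal A^{\mathbb P}_x$, stop it at $\tau$, and assert $H^\tau\in\mathcal A^{\mathbb Q}_x$. This step is not justified and in general fails. The direction of measure transfer for stochastic integrability is one-way: $\mathbb P\ll\mathbb Q$ gives $L^{\mathbb Q}(S)\subset L^{\mathbb P}(S)$, not the converse. From the $\mathbb P$-side you only know $H\in L^{\mathbb P}(S)$ and $(H\cdot S)^{\mathbb P}_t\ge -x$ $\mathbb P$-a.s.; since $\tau>T$ $\mathbb P$-a.s., stopping at $\tau$ does nothing under $\mathbb P$, and you have no control on the $\mathbb Q$-positive set $\{\tau\le T\}$. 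Condition \eqref{eq:condition_on_M} gives $M_{\tau-}=0$ on $\{\tau\le T\}$, but this says nothing about whether $H1_{]\!]0,\tau]\!]}$ is $S$-integrable under $\mathbb Q$, nor about the behaviour of $(H\cdot S)$ as $t\uparrow\tau$ under $\mathbb Q$. Your own caveat (``a $\mathbb Q$-null set on which admissibility might fail is invisible to $\mathbb P$'') is in the wrong direction: it is precisely $\mathbb Q$-admissibility on a $\mathbb P$-null but $\mathbb Q$-positive set that you must establish.

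The paper closes this gap with two ingredients you are missing. First, it localizes not at $\tau$ but at $\tau_n=\inf\{t:M_t<1/n\}$; since $M_{t\wedge\tau_n}>0$ $\mathbb Q$-a.s., one has $\mathbb P\sim\mathbb Q$ on each $\mathcal F_{t\wedge\tau_n}$, so $H^n:=H1_{]\!]0,\tau_n]\!]}$ is genuinely $S$-integrable and $x$-admissible under $\mathbb Q$. Second, it does \emph{not} attempt to pass to a single limiting $\mathbb Q$-strategy. Instead, it observes $V^n_T1_{M_T>0}\to V^{\mathbb P}_T1_{M_T>0}$ $\mathbb Q$-a.s.\ and invokes the Fatou-closedness of $\mathcal K-L^0_+$ under NFLVR (the remark after Corollary~1.2 in Delbaen--Schachermayer 1994) to conclude $V^{\mathbb P}_T1_{M_T>0}-x\in\mathcal K-L^0_+$, hence $x\ge SP^{\mathbb Q}_+(f1_{M_T>0})$. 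This Fatou-closedness argument is the key idea absent from your proposal; without it the hard inequality does not go through.
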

\begin{cor}
Under the assumptions of the theorem let 
$$
\sup_{\bar {\mathbb Q} \in ELMM({\mathbb Q},S)} {\mathbb E}^{\bar {\mathbb Q}} [1_{M_T>0}]  < 1.
$$
Then the $(\mathbb P,S)$-market admits optimal arbitrage and the optimal arbitrage strategy the superhedging strategy of the claim $1_{M_T > 0}$ in the $(\mathbb{Q},S)$-market.
\end{cor}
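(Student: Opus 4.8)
The plan is to chain together the identity $U(T) = 1/SP^{\mathbb P}_+(1)$ from the lemma above, Theorem \ref{theorem:optimal_arbitrage_constrution_martingale}, and the classical superhedging duality in the $(\mathbb Q,S)$-market. Taking $f\equiv 1$ in the theorem gives $SP^{\mathbb P}_+(1) = SP^{\mathbb Q}_+(1_{M_T>0})$, so that $U(T) = 1/SP^{\mathbb Q}_+(1_{M_T>0})$; it then suffices to identify the right-hand side and show it is strictly below $1$.

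For this, I would invoke the superhedging theorem under $\mathbb Q$. Since $S$ is locally bounded and $\mathbb Q$ satisfies NFLVR (hence NA), the remark following Definition \ref{defi:superhedge_price} gives $SP^{\mathbb Q}_+(g) = SP^{\mathbb Q}(g)$ for every bounded $g\ge 0$, and the optional decomposition theorem yields $SP^{\mathbb Q}(g) = \sup_{\bar{\mathbb Q}\in ELMM(\mathbb Q,S)}\mathbb E^{\bar{\mathbb Q}}[g]$, the infimum defining $SP^{\mathbb Q}(g)$ being attained by some strategy $H^{*}$; since $g\ge 0$, the wealth process $x_0 + (H^{*}\cdot S)$ is a supermartingale under every ELMM with nonnegative terminal value, hence is itself nonnegative, so $H^{*}\in\mathcal A_{x_0}$ with $x_0 := \sup_{\bar{\mathbb Q}}\mathbb E^{\bar{\mathbb Q}}[g]$. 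Applying this with $g = 1_{M_T>0}$ and using the hypothesis, $x_0 = SP^{\mathbb Q}_+(1_{M_T>0}) < 1$; therefore $SP^{\mathbb P}_+(1) < 1$ and $U(T) = 1/x_0 > 1$, i.e.\ the $(\mathbb P,S)$-market admits optimal arbitrage.

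For the strategy, I would re-read the proof of Theorem \ref{theorem:optimal_arbitrage_constrution_martingale} to confirm that the $\mathbb Q$-superhedging strategy $H^{*}$ of $1_{M_T>0}$ is also a $\mathbb P$-superhedging strategy of $1$ from initial cost $x_0$: because $\mathbb P\ll\mathbb Q$, the wealth process $x_0 + (H^{*}\cdot S)$ remains nonnegative $\mathbb P$-a.s., and on $\{M_T>0\}=\{\tau>T\}$, which has full $\mathbb P$-measure, it dominates $1$ at time $T$. Rescaling by $1/x_0 > 1$, the strategy $\tfrac{1}{x_0}H^{*}\in\mathcal A_1$ satisfies $V^{1,H^{*}/x_0}_T = 1 + \tfrac{1}{x_0}(H^{*}\cdot S)_T \ge \tfrac{1}{x_0} = U(T)$ $\mathbb P$-a.s., so $\tfrac{1}{x_0}H^{*}$ realizes the optimal arbitrage. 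The only genuine work beyond assembling known pieces is the transfer of the superhedging strategy across the non-equivalent measure change and the verification that the attaining strategy keeps a nonnegative wealth process; but these are precisely what Theorem \ref{theorem:optimal_arbitrage_constrution_martingale} and the optional decomposition theorem supply, so I expect no serious obstacle.
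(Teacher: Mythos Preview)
Your proposal is correct and follows essentially the same route as the paper: invoke Theorem \ref{theorem:optimal_arbitrage_constrution_martingale} with $f\equiv 1$, then apply the classical superhedging duality in the $(\mathbb Q,S)$-market to identify $SP^{\mathbb Q}_+(1_{M_T>0})$ with $\sup_{\bar{\mathbb Q}}\mathbb E^{\bar{\mathbb Q}}[1_{M_T>0}]$. The paper's own proof is a single sentence citing the duality result; you supply considerably more detail, in particular on why the attaining $\mathbb Q$-superhedging strategy transfers to a $\mathbb P$-superhedging strategy for $1$ and how to rescale it to realise $U(T)$---points the paper states in the corollary but does not argue explicitly.
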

\begin{proof}
By the the standard super-replication theorem under absence of arbitrage (Theorem 9 of \cite{delbaen_no-arbitrage_1995}),
$$SP^{\mathbb Q}_+(1_{M_T > 0}) = \sup_{\bar {\mathbb Q} \in ELMM({\mathbb Q},S)} {\mathbb E}^{\bar {\mathbb Q}} [1_{M_T}>0].
$$
\end{proof}
\begin{proof}[Proof of Theorem]Let $\bar {\mathbb Q}$ be a local martingale measure equivalent to $\mathbb Q$, and denote by $\bar Z$ its density with respect to $\mathbb Q$. 
\\
Step 1: we prove that the $(\mathbb P,S)$-market satisfies NUPBR by showing that $\bar Z/M$ is an ELMD. 

We define 
$$\tau_n = \inf\{ t \geq 0: M_t < \frac{1}{n}\}$$
with the convention $\inf \emptyset = + \infty$. Since, by condition (\ref{eq:condition_on_M}), $M$ does not jump to zero, we have that $M_{t\wedge \tau_n}>0$ $\forall t\geq 0$ $\mathbb Q$-a.s.

We remark that ${\mathbb Q} \ll {\mathbb P}$ on $\mathcal F_{t \wedge \tau_n}$. Indeed, take any $A \in \mathcal F_{t \wedge \tau_n}$ such that ${\mathbb P}(A) = 0$, we compute 
\begin{equation*}
{\mathbb Q}[A] = {\mathbb E}^{\mathbb Q}\left[ 1_A \frac{M_{t \wedge \tau_n}}{M_{t \wedge \tau_n}} \right] = {\mathbb E}^{\mathbb P}\left[ 1_A \frac{1}{M_{t \wedge \tau_n}} \right]  = 0.
\end{equation*}
This means ${\mathbb P}$ is equivalent to ${\mathbb Q}$ on $\mathcal F_{t \wedge \tau_n}$. 
%\begin{equation} \label{eq:key_property}
%{\mathbb Q}[ \tau > \tau_n \wedge T] = 1 \text{ for every } n.
%\end{equation}

By Corollary 3.10, page 168 of \cite{jacod_limit_2002}, to prove that a process $N$ is a $\mathbb P$-local martingale with localizing sequence $(\tau_n)$, we need to prove that $(NM)^{\tau_n}$ is a $\mathbb Q$-local martingale for every $n\geq 1$. Then,
\begin{itemize}
\item $\bar Z/M$ is a $\mathbb P$-local martingale since $\bar Z$ is a $\mathbb{Q}$-local martingale. 
\item $\bar Z S/M$ is a $\mathbb P$-local martingale since $\bar ZS$ is a $\mathbb Q$-local martingale.
\item $\bar Z V/M$ is a $\mathbb P$-local martingale for each $\mathbb P$-admissible $V$. Since $\mathbb{P}$ and $\mathbb{Q}$ are equivalent on $\mathcal F_{t \wedge \tau_n}$, we obtain $V^{\tau_n}$ is a $\mathbb{Q}$-admissible wealth process. Thus, for each $n$, we have $\bar Z V^{\tau_n}$ is a $\mathbb{Q}$-local martingale, so is $(\bar Z V)^{\tau_n}$.
\end{itemize}

\noindent Step 2: we prove the equality $SP_+^\mathbb P(f) = SP_+^{\mathbb Q}(f1_{M_T >0}).$\\ 
($\le$) Take any $x >0$ such that there exists a strategy $H \in \mathcal{A}^{\mathbb Q}_x$ which satisfies $ V_T = x + (H \cdot S)_T \ge f1_{M_T >0}, {\mathbb Q}-a.s.$ Since $\mathbb P \ll {\mathbb Q}$, Theorem 25, page 170 of \cite{protter_stochastic_2003} shows that $H \in L(S)$ under $\mathbb P$ as well and $H_{\mathbb Q} \cdot S = H_{\mathbb P} \cdot S, {\mathbb P}-a.s.$. We also see that $ x + (H \cdot S)_t \ge 0, {\mathbb P}-a.s$ and $ x + (H \cdot S)_T \ge f1_{M_T >0} = f, {\mathbb P}-a.s$. This means 
\begin{equation} \label{eq:SPlsSQ}
SP_+^{\mathbb P}(f) \le SP_+^{\mathbb Q}(f1_{M_T > 0}).
\end{equation}
($\ge$) For the converse inequality, take any $x >0$ such that there exists a strategy $H \in \mathcal{A}^{\mathbb P}_x$ and $ V^{\mathbb P}_T = x + (H \cdot S)_T \ge f, {\mathbb P}-a.s.$ We will show that $x \ge SP^{\mathbb Q}_+(f1_{M_T > 0 })$.\\
Define  $H^n:=H1_{ t \le \tau_n}$, then $H^n$ is $S$-integrable and $x$-admissible under ${\mathbb Q}$. From Step 1, we see that $\tau_n \wedge T \nearrow T$, ${\mathbb P}$-a.s. and therefore $V^n_T = x + (H^n \cdot S)_T \to V^{\mathbb P}_T$, ${\mathbb P}$-a.s. or 
$V^n_T 1_{M_T > 0} \to V^{\mathbb P}_T1_{M_T > 0} \ge f 1_{M_T > 0}$, ${\mathbb Q}$-a.s. The following convergence holds
$$V^n_T - V^n_T 1_{M_T = 0} = V^n_T 1_{M_T >0} \to  V^{\mathbb P}_T1_{M_T >0} \ge f 1_{M_T >0}, {\mathbb Q}-a.s.$$
The sequence $V^n_T - x - V^n_T 1_{M_T = 0} = (H^n \cdot S)_T - V^n_T 1_{M_T >0}$ is in the set $\mathcal{K} - L^0_+$ (under ${\mathbb Q}$) and uniformly bounded from below by $-x$. Because the $({\mathbb Q},S)$-market satisfies NFLVR condition, the set $\mathcal{K} - L^0_+$ is Fatou-closed (see Remark after Corollary 1.2 of \cite{delbaen_general_1994}) and we obtain $V^{\mathbb P}_T1_{M_T > 0} - x \in \mathcal{K} - L^0_+$ or $x \ge SP^{\mathbb Q}_+(f1_{M_T > 0})$. In other words, 
\begin{equation} \label{eq:SPgtSQ}
SP^{\mathbb P}_+(f) \ge SP^Q_+(f1_{M_T > 0}).
\end{equation}
From (\ref{eq:SPlsSQ}) and (\ref{eq:SPgtSQ}), the proof is complete.
\end{proof}

\subsection{A construction based on a predictable stopping time}\label{section:construction_predictable}
In this section, we apply the construction of Section \ref{section:construction_martingale} to a predictable stopping time. As before, we consider a measure $\mathbb Q$ on the space $(\Omega,\mathcal F, (\mathcal F)_{t\geq 0})$. Let $\sigma$ be a stopping time such that ${\mathbb Q}\left( \sigma > T \right) >0.$ We define a new probability measure, absolutely continuous with respect to ${\mathbb Q}$, by
\begin{align}\left. {\frac{d{\mathbb P}}{d{\mathbb Q}}} \right|_{\mathcal F_t}  =  \frac{{\mathbb Q}\left[ \sigma > T | \mathcal F_t \right] }{{\mathbb Q}\left[ \sigma > T \right] } := M_t.\label{defprobatau}
\end{align}
Under the new measure, ${\mathbb P}\left( \sigma > T \right) = {\mathbb E}^{\mathbb Q}\left( M_T 1_{\sigma > T} \right) = 1$.\\

% The following corollary relates the $SP_+$-price of $1$ in the $(P,S)$-market
% $$SP_+^{\mathbb P}(f) = \inf \left\lbrace x \ge 0 : \exists H \in \mathcal{A}_x^{\mathbb P}, V_T^{x,H} \ge f, {\mathbb P}-a.s \right\rbrace$$
% and the $SP^{\mathbb Q}_+$-price of $f1_{\tau = T}$ in the $({\mathbb Q},S)$-market, 
% \begin{align*}
% SP_+^{\mathbb Q}(f1_{\tau = T}) &= \inf \left\lbrace x \ge 0 : \exists H \in \mathcal{A}_x^{\mathbb Q}, V_T^{x,H} \ge f1_{\tau = T}, {\mathbb Q}-a.s \right\rbrace \\
% &=  \sup_{\bar {\mathbb Q} \in ELMM({\mathbb Q},S)} {\mathbb E}^{\bar {\mathbb Q}}[f1_{\tau = T}].
% \end{align*}

This construction has the following economic interpretation. Consider an event (E), such as the default of a company or a sovereign state, whose occurence is characterized by a stopping time $\sigma$. Given a planning horizon $T$, we are interested in the occurence of this event (E) before the planning horizon. Suppose that the market agents have common anticipations of the probability of future scenarios, which correspond to the arbitrage-free probability measure $\mathbb Q$, and that under this probability, the event (E) has nonzero probabilities of occuring both before and after the planning horizon. Consider now an informed economic agent who believes that the event (E) will not happen before the planning horizon $T$. For instance, the agent may believe that the company or the state in question will be bailed out in case of potential default. Our informed agent may then want to construct an alternative model $\mathbb P$, in which the arbitrage opportunity due to mispricing may be exploited and the arbitrage strategy may be constructed in a rigorous manner.\footnote{The ``informed agent'' interpretation of our arbitrage construction hints at possible connections with the research on arbitrage opportunities arising from enlargement of the underlying filtration with additional information, see e.g. \cite{imkeller2001free,fontana2012arbitrages}. The detailed study of these connections is left for further research. } The following corollary provides a method for constructing such a model. 
\begin{cor} \label{cor:optimal_arbitrage_predictable}
Assume that the following conditions hold
\begin{itemize}
\item The risky asset process $S$ is a locally bounded semimartingale which satisfies NFLVR under $\mathbb Q$.
\item The filtration $\mathbb{F}$ is quasi-left continuous.
\item $\sigma$ is a predictable stopping time such that for any stopping time $\theta$, 
$${\mathbb E}^{\mathbb Q}\left[ 1_{\sigma > T}\right |\mathcal F_{\theta} ] > 0, {\mathbb Q}-a.s. \qquad \text{on } \left\lbrace \sigma > \theta \right\rbrace.$$ 
\end{itemize} 
Then the $({\mathbb P}, S)$-market satisfies NUPBR. Given a $\mathcal F_T$-measurable claim $f \ge 0$, we have
$$SP^{\mathbb P}_+(f) = SP^{\mathbb Q}_+(f1_{\sigma > T}).$$ 
In addition, if
$$SP^{\mathbb Q}_+(1_{\sigma > T}) = \sup_{\bar {\mathbb Q} \in ELMM({\mathbb Q},S)} {\mathbb E}^{\bar {\mathbb Q}} [1_{\sigma > T}]  < 1,$$
then the $({\mathbb P},S)$-market admits optimal arbitrage.
\end{cor}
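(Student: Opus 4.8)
The plan is to obtain this statement as a special case of Theorem~\ref{theorem:optimal_arbitrage_constrution_martingale}, applied to the density process $M$ defined in~\eqref{defprobatau}. By construction $M$ is a nonnegative RCLL $\mathbb Q$-martingale with $M_0=1$; it is bounded by $1/\mathbb Q[\sigma>T]$, hence uniformly integrable, and since $\{\sigma>T\}\in\mathcal F_T$ we have $M_T=1_{\{\sigma>T\}}/\mathbb Q[\sigma>T]$, so that $\{M_T>0\}=\{\sigma>T\}$ up to a $\mathbb Q$-null set. Consequently, once the hypotheses of Theorem~\ref{theorem:optimal_arbitrage_constrution_martingale} are checked, its conclusion reads $SP^{\mathbb P}_+(f)=SP^{\mathbb Q}_+(f1_{\{M_T>0\}})=SP^{\mathbb Q}_+(f1_{\{\sigma>T\}})$, and the optimal-arbitrage assertion follows from the corollary to that theorem together with the super-replication duality (Theorem~9 of \cite{delbaen_no-arbitrage_1995}): indeed $SP^{\mathbb P}_+(1)=SP^{\mathbb Q}_+(1_{\{\sigma>T\}})=\sup_{\bar{\mathbb Q}\in ELMM(\mathbb Q,S)}\mathbb E^{\bar{\mathbb Q}}[1_{\{\sigma>T\}}]<1$, whence $U(T)=1/SP^{\mathbb P}_+(1)>1$. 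So everything reduces to verifying condition~\eqref{eq:condition_on_M} for $M$, i.e.\ that $M$ does not jump to zero on $[0,T]$.

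The core of the argument is to identify $\tau=\inf\{t\ge0:M_t=0\}$ with $\sigma$ on the event $\{\sigma\le T\}$. Applying the standing hypothesis with the constant stopping time $\theta\equiv t$ gives $M_t>0$ $\mathbb Q$-a.s.\ on $\{\sigma>t\}$ for each fixed $t$; letting $t$ range over the rationals and using right-continuity of $M$ together with the fact that a nonnegative martingale is absorbed at $0$, one upgrades this to $\{M_t=0\}\subseteq\{\sigma\le t\}$ simultaneously for all $t$, and hence $\tau\ge\sigma$ $\mathbb Q$-a.s. For the reverse inequality on $\{\sigma\le T\}$, recall that $\{\sigma\le T\}\in\mathcal F_\sigma$ because $T$ is a stopping time, so by optional sampling $M_\sigma=\mathbb Q[\sigma>T\mid\mathcal F_\sigma]/\mathbb Q[\sigma>T]$ and therefore $M_\sigma 1_{\{\sigma\le T\}}=\mathbb E^{\mathbb Q}[1_{\{\sigma>T\}}1_{\{\sigma\le T\}}\mid\mathcal F_\sigma]/\mathbb Q[\sigma>T]=0$; thus $\tau\le\sigma$ on $\{\sigma\le T\}$. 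Combining the two, $\tau=\sigma$ on $\{\sigma\le T\}$, and in fact $\{\tau\le T\}=\{\sigma\le T\}$.

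To conclude I would use that $\mathbb F$ is quasi-left-continuous, so that martingales in $\mathbb F$ have no jumps at predictable times: the bounded martingale $M$ then satisfies $M_{\sigma-}=M_\sigma$ $\mathbb Q$-a.s.\ at the predictable time $\sigma$. On $\{\tau\le T\}=\{\sigma\le T\}$ we have $\tau=\sigma$ and $M_\sigma=0$, whence $M_{\tau-}=0$ there; this is exactly~\eqref{eq:condition_on_M}, and Theorem~\ref{theorem:optimal_arbitrage_constrution_martingale} applies verbatim. The main obstacle is precisely this identification $\tau=\sigma$ on $\{\sigma\le T\}$ and the exclusion of a jump of $M$ to zero, since this is the only place where the predictability of $\sigma$, the strict-positivity hypothesis on $\mathbb E^{\mathbb Q}[1_{\{\sigma>T\}}\mid\mathcal F_\theta]$, and the quasi-left-continuity of $\mathbb F$ all genuinely enter; the remaining assertions are inherited directly from Theorem~\ref{theorem:optimal_arbitrage_constrution_martingale} and its corollary.
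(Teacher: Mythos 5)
Your proposal is correct and follows essentially the same route as the paper: reduce to Theorem \ref{theorem:optimal_arbitrage_constrution_martingale} by verifying condition \eqref{eq:condition_on_M}, identify $\tau$ with $\sigma$ on $\{\sigma\le T\}$ (using the strict-positivity hypothesis for $\tau\ge\sigma$ and optional sampling for $M_\sigma=0$ on $\{\sigma\le T\}$), and invoke quasi-left continuity of $\mathbb F$ together with predictability of $\sigma$ to rule out a jump of $M$ to zero. The paper's proof is merely terser; you have simply supplied the details it leaves implicit.
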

\begin{proof}
This result will follow from Theorem \ref{theorem:optimal_arbitrage_constrution_martingale} after checking the condition (\ref{eq:condition_on_M}) on $M$. Let $\tau = \inf\{t>0: M_t = 0\}$. By construction, $M_\sigma = 0$ on $\{\sigma \le T\}$ and $M_t>0$ for $t<\sigma$. This means that
$$
\tau= \left\{
\begin{aligned}
&\sigma,\quad &&\sigma \leq T\\
&+\infty, && \text{otherwise.}
\end{aligned}\right.
$$
Since the filtration $\mathbb{F}$ is quasi left continuous and $\sigma$ is a predictable stopping time, $M$ does not jump at $\sigma$ (see \cite{protter_stochastic_2003}, page 190).  This means that
$$
\mathbb Q[\{\tau\leq T\}\cap \{M_{\tau-}>0\} ] = 0
$$
and condition \eqref{eq:condition_on_M} is satisfied.
\end{proof}
\section{Examples} \label{section:examples}
\subsection{A complete market example}
Let $W^{\mathbb Q}$ be a Brownian motion and let  $\mathbb{F}$ be its completed natural filtration. We assume that the price of a risky asset evolves as follows
$$
S_t = 1 + W^{\mathbb Q}_t
$$
and define a predictable stopping time by $\sigma = \inf\{ t>0: S_t \le 0\}$. Using the law of infimum of Brownian motion, we get
$${\mathbb Q}[\sigma > T] = {\mathbb Q}[(W^{\mathbb Q})^*_T > -1] = 1 - 2\mathcal{N} \left( -\frac{1}{\sqrt{T}} \right) > 0,$$
where $\mathcal N$ denotes the standard normal distribution function. 

%$$\{\inf_{0 \le t \le T} B_{\sigma^2t} > -1\} \subset \{\tau \ge T\}  = \{B_{\sigma^2t} > -1, 0 \le t < T\} \subset  \{\inf_{0 \le t \le T} B_{\sigma^2t} \ge  -1\}  $$
Next, by Markov property we compute
\begin{align}
{\mathbb E}^{\mathbb Q}[1_{\sigma > T}|\mathcal F_t]  = {\mathbb Q}[(W^{\mathbb Q})^*_T > -1|\mathcal F_t] = \left\{ {\begin{array}{ll}
   {0} \qquad \text{on  } {\sigma \le t}   \\
   {1 - 2\mathcal{N} \left( -\frac{S_t}{\sqrt{T-t}} \right) > 0} \qquad \text{on }\sigma>t.\\
\end{array}} \right.\label{martrep}
\end{align}
Hence, on $\{\tau > t\}$, we obtain ${\mathbb E}^{\mathbb Q}[1_{\sigma > T}|\mathcal F_t] >0$. This means that the construction of Section \ref{section:construction_predictable} applies and we may define a new measure $\mathbb P$ via \eqref{defprobatau}. Since the $({\mathbb Q},S)$-market is complete and $ELMM({\mathbb Q},S) = \{{\mathbb Q}\}$, the superhedging price of the claim $1_{\sigma > T}$ is
$${\mathbb Q}[\sigma > T] =  1 - 2\mathcal{N} \left( -\frac{1}{\sqrt{T}} \right) < 1,$$
which means that the $\mathbb P$-market admits optimal arbitrage. 

Applying the Itô formula to \eqref{martrep}, we get the martingale representation:
\begin{align} \label{example:BM_dynamic_hedging} 
{\mathbb E}^{\mathbb Q}[1_{\sigma > T}|\mathcal F_t] = {\mathbb Q}[\sigma > T] + \sqrt{\frac{2}{\pi}} \int\limits_0^{\sigma\wedge t} {\frac{1}{\sqrt{T-s}} e^{-\frac{S_s^2}{2(T-s)}} dW^{\mathbb Q}_s}.
\end{align}
Therefore,
$$ H_t = \sqrt{\frac{2}{\pi}}  {\frac{1}{\sqrt{T-t}} e^{-\frac{S_t^2}{2(T-t)}}\mathbf 1_{t\leq \sigma}}$$
is the optimal arbitrage strategy, that is, the hedging strategy for $1_{\sigma > T}$ in the $({\mathbb Q},S)$-market as well as the hedging strategy for 1 in the $({\mathbb P},S)$-market.

Let us now compute the dynamics of $S$ under $\mathbb P$. 
By Girsanov's Theorem (see, e.g., Theorem 41 on page 136 of \cite{protter_stochastic_2003}),
$$ W^{\mathbb P}_t = W^{\mathbb Q}_t - \frac{2}{{\mathbb Q}[\sigma > T]\sqrt{2\pi}} \int\limits_0^{\sigma\wedge t} {\frac{1}{M_{s}} e^{-\frac{S_s^2}{2(T-s)}} \frac{1}{\sqrt{T-s}} ds}$$
is a ${\mathbb P}$-Brownian motion. The dynamics of $S$ under ${\mathbb P}$ are therefore given by
\begin{align} \label{eq:example_BM_dynamic_S}
S_t &= 1 +  W^{\mathbb P}_t + \frac{2}{{\mathbb Q}[\sigma > T]\sqrt{2\pi}} \int\limits_0^{\sigma\wedge t} \frac{e^{-\frac{S_s^2}{2(T-s)}}}{M_s\sqrt{T-s}} ds\\
& = 1 +  W^{\mathbb P}_t + \sqrt{\frac{2}{{\pi}}} \int\limits_0^{\sigma\wedge t} {\frac{1} {1 - 2\mathcal{N} \left( -\frac{S_s}{\sqrt{T-s}} \right) } \frac{e^{-\frac{S_s^2}{2(T-s)}}}{\sqrt{T-s}} ds}
\end{align}

Now, let us discuss the fragility and robustness of the arbitrage in this example in the sense of \cite{guasoni_fragility_2012}. The optimal arbitrage constructed using the predictable stopping time $\sigma = \inf\{ t>0: S_t \le 0\}$ is fragile. Indeed, from (\ref{eq:example_BM_dynamic_S}) we can write the dynamics of $X_t : = \log S_t$ as follows
\begin{equation} \label{eq:example_BM_dynamic_logS}
d X_t = {e^{-X_t}}dW^{\mathbb P}_t  + \left[  \frac{2}{{\mathbb Q}[\sigma > T]\sqrt{2\pi}} {e^{-X_t}} \frac{1}{M_{t}} e^{-\frac{S_t^2}{2(T-t)}} \frac{1}{\sqrt{T-t}} - \frac{1}{2}e^{-2X_t} \right] dt  
\end{equation}
Since in (\ref{eq:example_BM_dynamic_logS}), the drift is locally bounded and the volatility is continuous and nonsingular, by Theorem 2 of \cite{guasoni_fragility_2012}, we conclude that the arbitrage is fragile: for any $\varepsilon>0$, one can find a process $\widetilde S_t$ which admits an equivalent martingale measure and satisfies $1-\varepsilon \leq \frac{\widetilde S(t)}{S(t)}\leq 1+\varepsilon$ a.s. for all $t\in[0,T]$. 

However, we can slightly modify the stopping time $\sigma$ to construct an arbitrage which is not destroyed by small perturbations of the price process as above. More precisely, we choose the predictable stopping time $\sigma$ as the first time when $S_t$ hits a  line with positive slope, that is 
$$\sigma = \inf\{t \ge 0: S_t \le \alpha t\}$$
with $\alpha > 0$. By Proposition 3.2.1.1 in \cite{jeanblanc_mathematical_2009}, 
\begin{align*}
{\mathbb Q}[\sigma > T] &= {\mathbb Q} \left[  \inf_{0 \le t \le T} (W^{\mathbb Q}_t - \alpha t) > -1 \right] \\
& = \mathcal{N} \left( \frac{1 - \alpha T }{\sqrt{T}} \right) - e^{2 \alpha} \mathcal{N} \left( \frac{-1 - \alpha T }{\sqrt{T}} \right)\in(0,1),
\end{align*}
and we can define a measure $\mathbb P$ admitting optimal arbitrage via \eqref{defprobatau}. 
It is easy to see that $S_T > \alpha T$, ${\mathbb P}$-a.s. This allows to construct a simple buy-and-hold arbitrage strategy.
\begin{itemize}
\item If $\alpha T > 1$, buy one unit of $S$ in the begining and hold it until $T$. This strategy yields a profit of $\alpha T - 1$ with probability $1$. 
\item If $\alpha T \leq 1$, introduce the stopping time $\sigma_1 = \inf\{t>0: S_t = \frac{\alpha T}{2}\}$. If $\sigma_1 \leq \frac{T}{2}$, buy one unit of $S$ at $\sigma_1$ and hold it until $T$. Otherwise, do nothing. It is easy to see that $\mathbb P[\sigma_1\leq T/2] = \mathbb Q[\sigma_1\leq T/2, \sigma > T]>0$, which means that this strategy yields a profit greater than $\frac{\alpha T}{2}$ with positive probability. 
\end{itemize}
{This strategy is a simple obvious arbitrage in the sense of \cite{bender_simple_2012}, which means that it is robust and not fragile (see discussion at the end of section \ref{section:setting}).
}

We are going to compute the dynamics of $\log S$ in this case and compare to the results of \cite{guasoni_fragility_2012}.
By Markov property and the law of infimum of Brownian motion with drift, we compute the conditional probability  
\begin{align*}
{\mathbb E}^{\mathbb Q}[1_{\sigma > T}|\mathcal F_t]  =  \left\{ {\begin{array}{ll}
   {0} \qquad \text{if  } {\sigma \leq t}   \\
   {\mathcal{N} \left( \frac{S_t - \alpha T}{\sqrt{T-t}} \right)- e^{2 \alpha (S_t - \alpha t) } \mathcal{N} \left( \frac{-S_t + 2 \alpha t  - \alpha T}{\sqrt{T-t}} \right)} \qquad \text{if } \sigma > t.\\
   \end{array}} \right.
\end{align*}
Denoting
$$ Y^1_t = \frac{S_t - \alpha T}{\sqrt{T-t}}, \qquad Y^2_t = \frac{-S_t + 2 \alpha t  - \alpha T}{\sqrt{T-t}}  $$
and applying the Itô formula, we obtain the dynamics of the conditional law on $\sigma<t$:
% \begin{align*}
% dY^1_t & = \frac{1}{2} \frac{Y^1_t}{T-t}dt + \frac{1}{\sqrt{T-t}}dW^{\mathbb Q}_t, \\
% dY^2_t & = \left( \frac{2 \alpha}{\sqrt{T-t}} + \frac{1}{2} \frac{Y^2_t}{T-t} \right) dt - \frac{1}{\sqrt{T-t}} dW^{\mathbb Q}_t,\\
% d\mathcal{N}(Y^1_t) & = \frac{1}{\sqrt{2 \pi}} e^{-\frac{(Y^1)^2}{2}} \left[  \frac{1}{2} \frac{Y^1}{T-t}dt + \frac{1}{\sqrt{T-t}}dW^{\mathbb Q}_t \right] - \frac{1}{2\sqrt{2 \pi} }e^{\frac{(Y^1)^2}{2}} Y^1 \frac{1}{T-t} dt, \\
% &= \frac{1}{\sqrt{2 \pi}} e^{-\frac{(Y^1)^2}{2}} \frac{1}{\sqrt{T-t}}dW^{\mathbb Q}_t,\\
% d\mathcal{N}(Y^2_t) & = \frac{1}{\sqrt{2 \pi}} e^{-\frac{(Y^2)^2}{2}} \left[ \left( \frac{2 \alpha}{\sqrt{T-t}} + \frac{1}{2} \frac{Y^2}{T-t} \right) dt - \frac{1}{\sqrt{T-t}} dW^{\mathbb Q}_t \right] - \frac{1}{2\sqrt{2 \pi} }e^{\frac{(Y^2)^2}{2}} Y^2 \frac{1}{T-t} dt,\\
% & = \frac{1}{\sqrt{2 \pi}} e^{-\frac{(Y^2)^2}{2}} \left[ \frac{2 \alpha}{\sqrt{T-t}} dt - \frac{1}{\sqrt{T-t}} dW^{\mathbb Q}_t \right]
% \end{align*}
% and
% $$ d\left( e^{2\alpha (S_t - \alpha t)}  \mathcal{N}(Y^2_t) \right) = \left[  -  \frac{e^{2\alpha (S_t - \alpha t)}}{\sqrt{2 \pi}}\frac{e^{-\frac{(Y^2)^2}{2}}}{\sqrt{T-t}} + \mathcal{N}(Y^2_t) 2 \alpha e^{2\alpha (S_t - \alpha t)}  \right]  dW^{\mathbb Q}_t$$
% Therefore, we can write 
\begin{align*}
d{\mathbb E}^{\mathbb Q}[1_{\sigma > T}|\mathcal F_t] &= \left[ \frac{1}{\sqrt{2 \pi}}  \frac{e^{-\frac{(Y_t^1)^2}{2}}}{\sqrt{T-t}} +  \frac{e^{2\alpha (S_t - \alpha t)}}{\sqrt{2 \pi}}  \frac{e^{-\frac{(Y_t^2)^2}{2}}}{\sqrt{T-t}} - \mathcal{N}(Y^2_t) 2 \alpha e^{2\alpha (S_t - \alpha t)}  \right]  dW^{\mathbb Q}_t,\\
\end{align*}
and the dynamics of $M_t$:
$$ 
dM_t = \frac{1}{{\mathbb Q}[\sigma > T]} \left[ \frac{1}{\sqrt{2 \pi}}  \frac{e^{-\frac{(Y_t^1)^2}{2}}}{\sqrt{T-t}} + \frac{ e^{2\alpha (S_t - \alpha t)}}{\sqrt{2 \pi}}  \frac{e^{-\frac{(Y_t^2)^2}{2}}}{\sqrt{T-t}} - \mathcal{N}(Y^2_t) 2 \alpha e^{2\alpha (S_t - \alpha t)}  \right]  dW^{\mathbb Q}_t.
$$
By Girsanov's Theorem,
$$ dW^{\mathbb P}_t = dW^{\mathbb Q}_t - \frac{1}{M_t{\mathbb Q}[\sigma > T]} \left[ \frac{1}{\sqrt{2 \pi}}  \frac{e^{-\frac{(Y_t^1)^2}{2}}}{\sqrt{T-t}} + \frac{e^{2\alpha (S_t - \alpha t)}}{\sqrt{2 \pi}}  \frac{e^{-\frac{(Y_t^2)^2}{2}}}{\sqrt{T-t}} - \mathcal{N}(Y^2_t) 2 \alpha e^{2\alpha (S_t - \alpha t)}  \right]dt$$
is a ${\mathbb P}$-Brownian motion. Finally, the dynamic of $S$ under ${\mathbb P}$ is 
$$ dS_t = dW^{\mathbb P}_t + \frac{1}{M_t{\mathbb Q}[\sigma > T]} \left[ \frac{1}{\sqrt{2 \pi}}  \frac{e^{-\frac{(Y_t^1)^2}{2}}}{\sqrt{T-t}} +  \frac{e^{2\alpha (S_t - \alpha t)}}{\sqrt{2 \pi}}  \frac{e^{-\frac{(Y_t^2)^2}{2}}}{\sqrt{T-t}} - \mathcal{N}(Y_t^2) 2 \alpha e^{2\alpha (S_t - \alpha t)}  \right]dt.$$
Applying Itô's formula once again, we see that $X_t = \log S_t$ satisfies
\begin{align}
\label{eq:example_BM_dynamic_logS_drift}
dX_t & = e^{-X_t} dS_t -\frac{1}{2}e^{-2X_t}dt \nonumber\\
& =  \frac{e^{-X_t}}{M_t{\mathbb Q}[\sigma > T]} \left[ \frac{1}{\sqrt{2 \pi}}  \frac{e^{-\frac{(Y_t^1)^2}{2}}}{\sqrt{T-t}} + \frac{e^{2\alpha (S_t - \alpha t)}}{\sqrt{2 \pi}}  \frac{e^{-\frac{(Y_t^2)^2}{2}}}{\sqrt{T-t}} - \mathcal{N}(Y^2_t) 2 \alpha e^{2\alpha (S_t - \alpha t)}  \right]dt \nonumber\\
&-\frac{1}{2}e^{-2X_t}dt + {e^{-X_t}}dW^{\mathbb P}_t.
\end{align}
The drift in (\ref{eq:example_BM_dynamic_logS_drift}) can be written as a function of $(t, X_t)$, and it is not locally bounded, for example, $1/M$ is unbounded in a neighborhood of $(t, \log (\alpha t))$. So the result of \cite{guasoni_fragility_2012} breaks down. 

% Second, one may ask if he can collect more information in the market, then will his optimal arbitrage profit increase? The answer in this case is negative. Now assume that the filtration $\mathbb{F}$ is not the completed natural Brownian filtrations, then the market is incomplete and optimal arbitrage profit still exists. In fact, under any equivalent local martingale measure $\bar {\mathbb Q}$ such that $S_t$ is a $\bar {\mathbb Q}$-local martingale, we note that and $S$ has the same quadratic variation as under ${\mathbb Q}$, that is $[S]_t = t$. By Dubins-Schwarz theorem, $S_t$ can be represented by a time-changed Brownian motion: $S_t = W^{\bar {\mathbb Q}}_t$, under any ELMM $\bar {\mathbb Q}$. This means
% $$ SP^{\mathbb Q}(1_{\tau = T} ) = \sup_{\bar {\mathbb Q} \in ELMM({\mathbb Q},S)} \bar {\mathbb Q}[\tau = T] = 1 - 2\mathcal{N} \left( \frac{-1}{\sqrt{T}} \right) < 1.$$

\subsection{A robust arbitrage based on the Poisson process}
Another way to ensure robustness of arbitrage with respect to small perturbations is to introduce jumps into the price process dynamics. 
Let $N$ be a standard Poisson process under ${\mathbb Q}$ and assume that $\mathbb F = \mathbb F^N$, which is a quasi left-continuous filtration. Then $S_t = 1 + N_t - t$ is a ${\mathbb Q}$-martingale. We define a predictable stopping time
$\tau = \inf \{ t>0: S_t \le 0\}$ and a new probability measure ${\mathbb P} \ll {\mathbb Q}$ via $d{\mathbb P}|_{\mathcal F_t} = S_{t \wedge \tau} d{\mathbb Q}|_{\mathcal F_t}$. 
The $({\mathbb P},S)$-market admits optimal arbitrage provided $T>1$, because in this case $SP^{\mathbb Q}(1_{S_T > 0}) = {\mathbb Q}[S_T > 0] < 1$. 

{Here, unlike the first example of this section or the Bessel process example discussed in  \cite{guasoni_fragility_2012}, we can prove that the arbitrage is not fragile. Indeed, we fix a real number $\varepsilon > 0$ and construct a simple buy-and-hold
arbitrage strategy as follows:
\begin{itemize}
\item if $S$ jumps on $[0, \varepsilon]$ then we do nothing.
\item if $S$ does not jump on $[0, \varepsilon]$, we buy one unit of $S$ at $\varepsilon$ and hold it until the first jump time of $S$.
\end{itemize} 
Assuming that $T>1$, the process $N$ must jump before $T$, because $S_t> 0, \mathbb{P}$-a.s. This means that this strategy generates a profit greater than $\varepsilon$ with positive probability. Therefore, this is a simple obvious arbitrage in the sense of \cite{bender_simple_2012}, and so it is not fragile (see discussion at the end of section \ref{section:setting}).
}

\subsection{Extension to incomplete markets}

Assume that $S$ is a nonnegative ${\mathbb Q}$ local martingale with only positive jumps starting at 1. Suppose that the conditions in Corollary \ref{cor:optimal_arbitrage_predictable} are fulfilled.  The requirement of quasi-left continuity is not so restrictive, for example, the natural completed filtration of a L\'evy process is 
a quasi-left continuous filtration (see Exercise 8 page 148 of \cite{protter_stochastic_2003}).

Let $a$ be a positive number such that  $aT/2>1$ and define a predictable stopping time by $\sigma = \inf\{t>0: S_t \leq at\}$. 
%Suppose that market agents commonly describe $S$ in a way such that it can go below the line $\alpha t$, for example $0< Q[\tau < T]<1$. Nevertheless, a trader believes this phenomena does not occur. He will modify the common perspective by introducing his measure 
The measure $\mathbb{P}$ is defined by
\begin{align*}\left. {\frac{d{\mathbb P}}{d{\mathbb Q}}} \right|_{\mathcal F_t}  =  \frac{{\mathbb Q}\left[ \sigma > T | \mathcal F_t \right] }{{\mathbb Q}\left[ \sigma > T \right] }.
\end{align*}
%In his point of view, $S$ is always above the line $\alpha t$ because $\mathbb{P}[\tau = T] = 1.$ 
From the economics point of view, this arbitrage represents a bet that the asset price will remain above the line $\alpha t$. Then for any equivalent local martingale measure $\bar {\mathbb Q}$ such that $S$ is a nonnegative $\bar {\mathbb Q}$ local martingale, we have
\begin{align*}
\bar {\mathbb Q}[\sigma \leq T] &\ge \bar {\mathbb Q}[S_{T/2} \le aT/2] = 1 - \bar {\mathbb Q}[S_{T/2} > aT/2] \\
&\ge 1 - \frac{{\mathbb E}^{\bar {\mathbb Q}} [S_{T/2}]}{aT/2} \ge 1 - \frac{1}{aT/2}.
\end{align*}
The superhedging price is $SP^{\mathbb Q}(1_{\sigma > T} ) = \sup_{\bar {\mathbb Q}} \bar {\mathbb Q}[\sigma > T] \le \frac{1}{aT/2} < 1$. Therefore, the $(\mathbb P,S)$-market admits optimal arbitrage. Since $S_T > aT > 2, {\mathbb P}-a.s.$, the presence of arbitrage is robust with respect to small perturbations of $S$. 

\subsection{A variation: building an arbitrage from a bubble}
Let $S$ be a nonnegative ${\mathbb Q}$ local martingale with no positive jumps, satisfying $S_0$ = 1 and $S_t \le \varepsilon < 1$, ${\mathbb Q}$-a.s for $t \ge T$. In other words, the market admits a bubble.

We define
$\sigma = \inf\{t \ge 0: S_t > K\}$ for $K > 1$. In this example, a trader believes that the price of $S$ may not exceed an upper bound $K$. As in previous examples, this trader may construct an arbitrage model $\mathbb{P}$ as in (\ref{defprobatau}), provided that the conditions in Corollary \ref{cor:optimal_arbitrage_predictable} are fulfilled.
  Under any ELMM $\bar {\mathbb Q}$, $S_{\sigma \wedge t}$ is a bounded $\bar {\mathbb Q}$ local martingale and hence a $\bar {\mathbb Q}$ martingale. So we get 
$$1 = {\mathbb E}^{\bar {\mathbb Q}}[S_{\sigma \wedge T}] = K \bar {\mathbb Q}[\sigma \leq T] + S_T \bar {\mathbb Q}[\sigma > T],$$
and therefore
$$\bar {\mathbb Q}[\sigma \leq T] = \frac{1 - S_T\bar {\mathbb Q}[\sigma > T]}{K} > \frac{1-\varepsilon}{K}.$$
The superhedging price of $1_{\sigma > T}$ is
$$ \sup_{\bar {\mathbb Q} \in ELMM({\mathbb Q},S)} {\mathbb E}^{\bar {\mathbb Q}}[1_{\sigma > T}] < 1 - \frac{1-\varepsilon }{K}.$$

\subsection{A joint bet on an asset and its volatility} \label{example:stochastic_volatility}
Let 
$$
S_t = 1 +  \int_0^t \sigma^S_udW^{\mathbb Q}_u,
$$
where $W^{\mathbb Q}$ is a Brownian motion and $\sigma^S$ is a volatility process
assumed to be continuous. Let $\underline\sigma< \sigma^S_0$
and define stopping times as follows:
$$
\sigma_1 = \inf\{t>0: S_t \leq 0\},\qquad \sigma_2 = \inf\{t>0: \sigma^S_t
\leq \underline\sigma\},\qquad \sigma = \sigma_1\wedge \sigma_2.
$$
Assume that ${\mathbb Q}[\sigma_1 \leq T, \sigma_2 > T] > 0$ and the conditions in Corollary \ref{cor:optimal_arbitrage_predictable} are fulfilled. This choice of the stopping time represents a bet that $S$ will not hit $0$ and its volatility will stay above $\underline{\sigma}$ up to time $T$. Then, under any equivalent local martingale measure $\bar {\mathbb Q}$, we have 
\begin{align*}
\bar {\mathbb Q}[\sigma \leq T] &= \bar {\mathbb Q}[\sigma_1 \leq T, \sigma_2 > T] + \bar {\mathbb Q} [\sigma_2 \leq T] \\ &\geq \bar {\mathbb Q}[B^*_{\underline \sigma^2 T}\leq -1, \sigma_2 > T] + \bar {\mathbb Q}[\sigma_2 \leq T] \geq \bar {\mathbb Q}[B^*_{\underline
  \sigma^2 T}\leq -1]>0,
\end{align*}
where $B^*_t = \inf_{0\leq u \leq t} B_u$ and $B$ is the Brownian
motion such that 
$$\int\limits_0^t {\sigma^S_udW^{\mathbb Q}_u}= B_{\int_0^t (\sigma^S_u)^2 du}.$$ 
Because $\bar {\mathbb Q}[\sigma \leq T]$ is bounded from below by the quantity $\bar {\mathbb Q}[B^*_{\underline\sigma^2 T}\leq -1]$. This quantity is positive and does not depend on $\bar {\mathbb Q}$, since $B$ is a Brownian motion under $\bar {\mathbb Q}$. Therefore $\sup_{\bar {\mathbb Q}} \bar Q[\sigma > T]$ is bounded away from one.\\
The superhedging price $SP^{\mathbb Q}(1_{\sigma > T} ) = \sup_{\bar {\mathbb Q}} \bar {\mathbb Q}[\sigma > T] < 1$.

\subsection{A variation: betting on the square bracket}
The construction in Example \ref{example:stochastic_volatility} can be modified as follows. Let $S$ be a nonnegative continuous semimartingale starting at 1 and be a ${\mathbb Q}$ local martingale. We define 
$$
\sigma_1 = \inf\{t>0: S_t \le 0\},\quad \sigma_2 = \inf\{t>0:[S]_t \le -a + bt\},\quad \sigma = \sigma_1\wedge \sigma_2,
$$
where $a, b$ are positive constants. Suppose that ${\mathbb Q}[\sigma_1 \leq T, \sigma_2 > T] > 0$ and the conditions in Corollary \ref{cor:optimal_arbitrage_predictable} are fulfilled. Then under any ELMM $\bar {\mathbb Q}$, we compute
\begin{align*}
\bar {\mathbb Q}[\sigma \leq T] &= \bar {\mathbb Q}[\sigma_1 \leq T, \sigma_2 > T] + \bar {\mathbb Q} [\sigma_2 \leq T] \\ &\geq \bar {\mathbb Q}[B^*_{[S]_T}\leq -1, \sigma_2 > T] + \bar {\mathbb Q}[\sigma_2 \leq T] \geq \bar {\mathbb Q}[B^*_{bT-a}\leq -1]>0,
\end{align*}
This implies that the superhedging price
$SP^{\mathbb Q}(1_{\sigma > T} ) = \sup_{\bar {\mathbb Q}} \bar {\mathbb Q}[\sigma > T] < 1$.

% \section{Conclusion}
% This paper shows that optimal arbitrage profit can be expressed in term of the supper hedging price of one unit of money. By applying the absolutely (but non equivalent) measure change technique to predictable stopping times, some examples with optimal arbitrage are derived. It turns out that some arbitrage are robust under small frictions, unlike as in the well-known Bessel example where optimal arbitrage is fragile.

%\bibliography{optimal_arbitrage}

\end{document}